\newif\ifnotes
\newcommand{\dzejla}[1]{\textcolor{blue}{{\footnotesize #1}}\marginpar{\raggedright\tiny \textcolor{blue}{Dzejla}}}
\definecolor{purple}{rgb}{0.5,0,0.5}
\newcommand{\mayank}[1]{\textcolor{purple}{{\footnotesize #1}}\marginpar{\raggedright\tiny \textcolor{purple}{Mayank}}}
\definecolor{green}{rgb}{0,1,0}
\newcommand{\emina}[1]{\textcolor{green}{{\footnotesize #1}}\marginpar{\raggedright\tiny \textcolor{orange4}{Emina}}}
\definecolor{orange}{rgb}{1,0.6,0}
\newcommand{\ehsan}[1]{\textcolor{orange}{{\footnotesize #1}}\marginpar{\raggedright\tiny \textcolor{orange}{Ehsan}}}
\newcommand{\prashant}[1]{\textcolor{red}{{\footnotesize #1}}\marginpar{\raggedright\tiny \textcolor{red}{Prashant}}}
\renewcommand{\dzejla}[1]{{\ifnotes \scriptsize \textcolor{green}{Dzejla: {#1}} \fi}}
\renewcommand{\mayank}[1]{\ifnotes {\noindent \scriptsize  \textcolor{blue} {Mayank: {#1}}} \fi{}}
\renewcommand{\emina}[1]{{\ifnotes \scriptsize \textcolor{orange}{Emina: {#1}} \fi}}
\renewcommand{\ehsan}[1]{{\ifnotes \scriptsize \textcolor{orange}{Ehsan: {#1}} \fi}}
\renewcommand{\prashant}[1]{{\ifnotes \scriptsize \textcolor{red}{Prashant: {#1}} \fi}}
\newcommand{\dzejla}[1]{}
\newcommand{\mayank}[1]{}
\newcommand{\emina}[1]{}
\newcommand{\ehsan}[1]{}
\newcommand{\prashant}[1]{}
\newcommand{\bcms}{buffered count-min sketch\xspace}
\newcommand{\cms}{count-min sketch\xspace}
\title{Buffered Count-Min Sketch on SSD: Theory and Experiments}
\author{Mayank Goswami}{Queens College, City University of New York{}}{mayank.goswami@qc.cuny.edu}{}{}
\author{Dzejla Medjedovic}{International University of Sarajevo{}}{dzmedjedovic@ius.edu.ba}{}{}
\author{Emina Mekic}{Sarajevo School of Science and Technology{}}{emina.mekic@stu.ssst.edu.ba}{}{}
\author{Prashant Pandey}{Stony Brook University, New York{}}{ppandey@cs.stonybrook.edu}{}{}
\authorrunning{M.\ Goswami, D.\ Medjedovic, E.\ Mekic, and P.\ Pandey}
\subjclass{
\ccsdesc[500]{Theory of computation~Data structures and algorithms for data management},
\ccsdesc[300]{Theory of computation~Streaming models},
\ccsdesc[300]{Theory of computation~Database query processing and optimization (theory),}}
\keywords{Streaming model, Count-min sketch, Counting, Frequency, External memory, I/O efficiency, Bloom filter, Counting filter, Quotient filter.}
\begin{document}

\maketitle

\begin{abstract}

Frequency estimation data structures such as the count-min sketch (CMS) have
found numerous applications in databases, networking, computational biology and
other domains.
Many applications that use the count-min sketch process massive and rapidly
evolving datasets. For data-intensive applications that aim to keep the
overestimate error low, the count-min sketch may become too large to store in
available RAM and may have to migrate to external storage (e.g., SSD.) Due to
the random-read/write nature of hash operations of the count-min sketch, simply
placing it on SSD stifles the performance of time-critical applications,
requiring about $4$-$6$ random reads/writes to SSD per estimate (lookup) and
update (insert) operation.

In this paper, we expand on the preliminary idea of the Buffered Count-Min
Sketch (BCMS)~\cite{EydiMeMe18}, an SSD variant of the count-min sketch, that
used hash localization to scale efficiently out of RAM while keeping the total
error bounded.
We describe the design and implementation of the buffered count-min sketch, and
empirically show that our implementation achieves $3.7\times$-$4.7\times$ the
speedup on update (insert) and $4.3\times$ speedup on estimate (lookup)
operations.

Our design also offers an asymptotic improvement in the external-memory
model~\cite{AggarwalVi88} over the original data structure: $r$ random I/Os are
reduced to $1$ I/O for the estimate operation. For a data structure that uses
$k$ blocks on SSD, $w$ as the word/counter size, $r$ as the number of rows, $M$
as the number of bits in the main memory, our data structure uses $kwr/M$
amortized I/Os  for updates, or, if $kwr/M > 1$, 1 I/O in the worst case. In
typical scenarios, $kwr/M$ is much smaller than $1$. This is in contrast to
$O(r)$ I/Os incurred for each update in the original data structure.

Lastly, we mathematically show that for the buffered count-min sketch, the error
rate does not substantially degrade over the original count-min sketch due to
hash localization. Specifically, we prove that for any query $q$, our data
structure provides the guarantee: $\text{Pr}[\text{Error}(q) \geq n \epsilon
(1+o(1))] \leq \delta + o(1)$, which, up to $o(1)$ terms, is the same guarantee
as that of a count-min sketch.




\end{abstract}

\section{Introduction}

Applications that generate and process \emph{massive data streams} are becoming
pervasive~\cite{BabcockBaDa02, MankuMo02, NathGiSe04, GaberzaKr05, Woodruff16}
across many domain in computer science.  Common examples of streaming datasets
include financial markets, telecommunications, IP traffic, sensor networks,
textual data, etc~\cite{BabcockBaDa02, ChenCaJi10, ZhangSiSe10, BreslauCaFa99}.
Processing fast-evolving and massive datasets poses a challenge to traditional
database systems, where commonly the application stores all data and
subsequently does queries on it. In the streaming
model~\cite{babcock2002models}, the dataset is too large to be completely stored
in the available memory, so every data item is seen and processed once --- an
algorithm in this model performs only one scan of data, and uses sublinear local
space.

The streaming scenario exhibits some limitations on the types of problems we can
solve with such strict time and space constraints. A classic example is the
heavy hitter problem \texttt{HH(k)} on the stream of pairs $(a_{t}, c_{t})$,
where $a_{t}$  is the item identifier, and $c_{t}$ is the count of the item at
timeslot $t$, with the goal of reporting all items whose frequency is at least
$n/k$, $n= \sum_{t=1}^{T}c_{t}$. The general version of the problem, with the
exception of when $k$ is a small constant\footnote{When $k \approx 2$ this
problem goes by the name of majority element.}, can not be exactly solved in the
streaming model~\cite{RoughgardenVa18, ZhangSiSe10}, but the approximate version
of the problem, $\epsilon$\texttt{-HH(k)}, where all items of the frequency at
least $n/k-\epsilon n$ are reported, and an item with larger error might be
reported with small probability $\delta$, is efficiently solved with the
count-min sketch~\cite{CormodeMu05, MuthukrishnanCo11} data structure. Count-min
sketch  accomplishes this in $O(\ln(1/\delta)/\epsilon)$ space, usually far
below linear space in most applications. 

Count-min sketch~\cite{CormodeMu05, MuthukrishnanCo11} has been extensively used
to answer heavy hitters, top $k$ queries and other \emph{popularity measure
queries} that represent the central problem in the streaming context, where we
are interested in extracting the essence from an impractically large amount of
data. Common applications include displaying the list of bestselling items, the
most clicked-on websites, the hottest queries on the search engine, most
frequently occurring words in a large text, and so on~\cite{SchechterHeMi10,
NathGiSe04, ZhaoOgWa06}. 

%
%

Count-min sketch (CMS) is a hashing-based, probabilistic and lossy
representation of a multiset, that is used to answer the count of a query $q$
(number of times $q$ appears in a stream). It has two error parameters: 1)
$\epsilon$, which controls the overestimation error, and 2) $\delta$, which
controls the failure probability of the algorithm. The CMS provides the
guarantee that the estimation error for any query $q$ is more than $\epsilon n$
with probability at most $\delta$. If we set $r = \ln(1/\delta)$ and $c =
e/\epsilon$, the CMS is implemented using $r$ hash functions as a  2D array of
dimensions $r$ x $c$.

When $\epsilon$ and $\delta$ are constants, the total overestimate grows
proportionately with $n$, the size of the count-min sketch remains small, and
the data structure easily fits in smaller and faster levels of memory. For some
applications, however, the allowed estimation error of $\epsilon n$ is too high
when $\epsilon$ is fixed. Consider an example of $n=2^{30}$, where $\delta=0.01$
and $\epsilon = 2^{-26}$, hence the overestimate is 16, and the total data
structure size of 3.36GB, provided each counter uses 4 bytes. However, if we
double the dataset size, then the total overestimate also doubles to 32 if
$\epsilon$ stays the same. On the other hand, if we want to maintain the fixed
overestimate of 16, then the data structure size doubles to 6.72GB. 

In this paper, we expand on the preliminary idea of Buffered Count-Min Sketch
(BCMS)~\cite{EydiMeMe18}, an SSD variant of the count-min sketch data structure,
that scales efficiently to large datasets while keeping the total error bounded.
Our work expands on the previous work by introducing detailed design,
implementation and experiments, as well as mathematical analysis of the new data
structure (our original paper~\cite{EydiMeMe18}, which, to the best of our
knowledge is the only attempt thus far to scale count-min sketch to SSD,
contains only the outline of the data structure). 

To demonstrate the issues arising from a growing count-min sketch and storing it
in lower levels of memory, we run a mini in-RAM experiment for count-min sketch
sizes 4KB-64MB. In Figure~\ref{exp0}, we see that to maintain the same error,
the cost of update will increase as the data structure is being stored in the
lower levels of memory, even though we keep the number of hash functions fixed
for all data structure sizes. The appropriate peak in the cost is visible at the
border of L2 and L3 cache (at 3MB). 

\begin{figure}
\centering
\includegraphics[width=10cm]{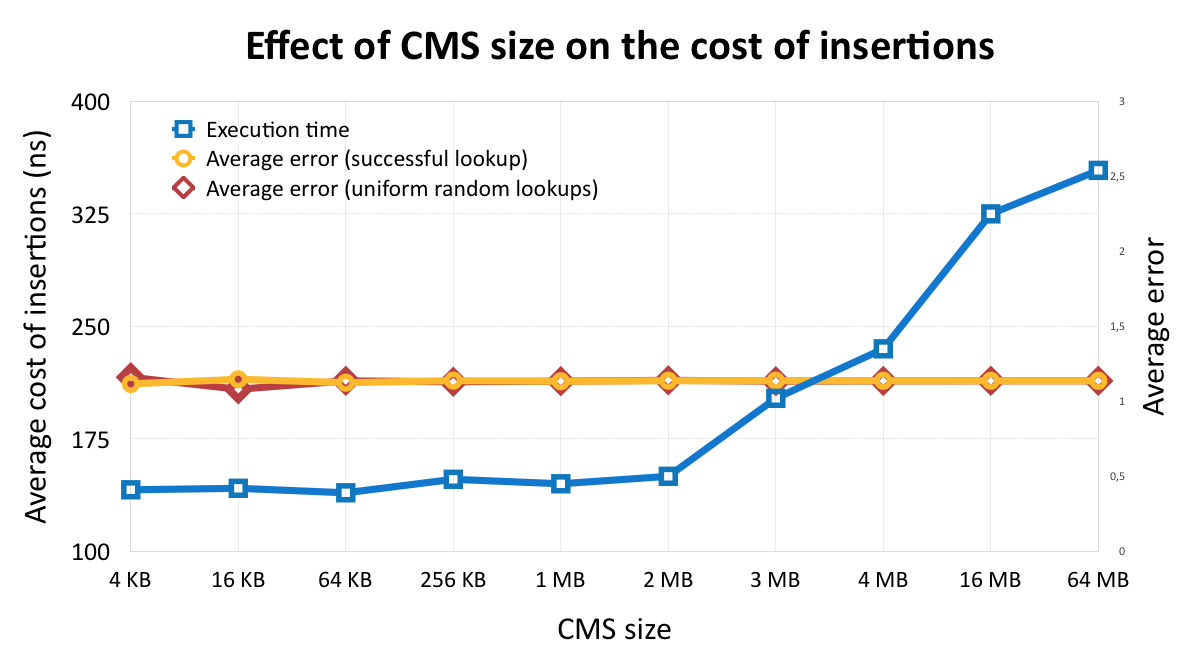}
\caption{The effect of increasing count-min sketch size on the update operation cost in RAM.}
\label{exp0}
\end{figure}

Asymptotically, storing the unmodified count-min sketch on SSD or a disk is
inefficient, given that each estimate and update operation needs $r$ hashes,
which results in $O(r)$ random reads/writes to SSD, far below the desired
throughput for most time-critical streaming applications.

Another context where we see CMS becoming large even when $\epsilon$ is fixed is
in some text applications,where the number of elements inserted in the sketch is
quadratic in the original text size. For instance,~\cite{GoyalJaDa10} uses CMS
to record distributional similarity on the web, where each pair of words is
inserted as a  single item into the CMS, and 90GB of text requires a CMS of 8GB.

\subsection{Results}
\begin{enumerate}
\item We describe the design and implementation of buffered count-min sketch, and empirically show that our implementation achieves 3.7-4.7x the speedup on update (insert) and 4.3x speedup on estimate (lookup) operations.
\item Our design also offers an asymptotic improvement in the external-memory model~\cite{AggarwalVi88} over the original data structure: $O(r)$ random I/Os are reduced to 1 I/O for estimate. For a data structure that uses $k$ blocks on SSD, $w$ as the word/counter size, $r$ as the number of rows, $M$ as the number of bits in main memory, our data structure uses $kwr/M$ amortized  I/Os  for updates, or, if $kwr/M > 1$, 1 I/O in the worst case. In typical scenarios,  $kwr/M < < 1$. This is in contrast to $O(r)$ I/Os incurred for each update in the original data structure.

\item We mathematically show that for buffered count-min sketch, the error rate does not substantially degrade over the original count-min sketch. Specifically, we prove that for any query $q$, our data structure provides the following guarantee:
$$ \text{Pr}[\text{Error}(q) \geq n \epsilon (1+o(1))] \leq \delta + o(1).$$
\end{enumerate}

We focus on scenarios where the allowed estimation error is sublinear in $n$. For example, what if we want the estimation error to be no larger than $n/ \log n$, or $\sqrt{n}$? These scenarios correspond to $\epsilon = 1/\log n$ or $1/\sqrt{n}$, and now for even moderately large values of $n$, the count-min sketch becomes too large to fit in main memory. Even given more modest condition, such as $\epsilon = o(1/M)$, where the memory is of size $M$, the count-min sketch is unlikely to fit in memory. We will assume that $1/n \leq \epsilon << 1/M$. Higher values of $\epsilon$ do not require the count-min sketch to be placed on disk, and lower values of $\epsilon$ mean exact counts are desired.

\section{Related Work}
The streaming model represents many real-life situations where the data is produced rapidly and on a constant basis. Some of the applications include sensor networks~\cite{NathGiSe04}, monitoring web traffic~\cite{SarlosBeCs06}, analyzing text~\cite{GoyalJaDa10}, and monitoring satellites orbiting the Earth~\cite{GertzQuRu06}.

Heavy hitters, top $k$ queries, iceberg queries, and quantiles ~\cite{Woodruff16, NathGiSe04, BabcockBaDa02} are some of the most central problems in the streaming context, where we wish to extract the general trends from a massive dataset. Count-Min sketch has proved useful in such contexts for its space-efficiency and providing accurate counts~\cite{CormodeMu05, MankuMo02}. 

Count-Min sketch can be well illustrated using its connection to the Bloom filter~\cite{Bloom70, BroderMi05, BonomiMiPa06}. Both data structures are lossy and space-efficient representations of sets, used to reduce disk accesses in time-critical applications. Bloom filter answers membership queries and can have false positives, while Count-Min sketch answers frequency queries, and can overestimate the actual frequency count. Both data structures are hashing-based, and suffer from similar issues when placed directly to SSD or a magnetic disk. 

There has been earlier attempts to scale Bloom filters to SSD using buffering and
hash localization~\cite{CanimMiBi10, DebnathSeLi11}. Our paper employs similar
methods to those in~\cite{CanimMiBi10, DebnathSeLi11}. The improvement, both in our case and in the case of Buffered Bloom filter~\cite{CanimMiBi10} is achieved at the expense of having an extra hash function that helps determine to which page each element is going to hash.

There has also been work in designing cache-efficient equivalents for Bloom filters
such as quotient filter and write-optimized on-disk quotient filter such as Cascade filter (CQF)~\cite{BenderFaJo12, DuttaNaBe13, PandeyBJP17}. An important distinction to make between these data structures and count-min sketch is that CQF gives exact counts of most of the elements given that the errors caused by false positives are usually very small. However, since the errors are independent, the CQF doesn't offer any guarantees on the overestimate. For example, two highly occurring elements in a multi-set can collide with each other and both will have large overcounts. On the other hand, the CMS does not give exact counts of elements due to multiple hashes and its size (width of the CMS is smaller than the number of slots in a CQF). But the CMS can offer a guarantee that overestimate will be smaller than $\epsilon n$ with a probability of $\delta$.

\subsection{Count-Min Sketch: Preliminaries}
In the streaming model, we are given a stream $A$ of pairs $(a_{i},
c_{i})$, where $a_{i}$ denotes the item identifier (e.g., IP address, stock ID, product ID), and $c_{i}$ denotes the count of the item (e.g., the number of bytes sent from the IP address, the amount by which a stock has risen/fallen or the number of sold items). 
Each pair $X_{i} = (a_{i}, c_{i})$ is an item within a stream of length $T$, and the goal is to record total sum of frequencies for each particular item $a_{i}$.

For a given estimation error rate $\epsilon$ and failure probabiltity $\delta$, define $r = \ln (1/\delta)$ and $c = e/\epsilon$. The Count-Min Sketch is represented via 2D table with $c$ buckets
(columns), $r$ rows, implemented using $r$ hash functions (one hash function per
row). 

CMS has two operations: \texttt{UPDATE($a_i$)} and \texttt{ESTIMATE($a_{i}$)}, the
respective equivalents of \texttt{insert} and \texttt{lookup}, and they are
performed as follows: 
%

\begin{itemize}
\item \texttt{UPDATE($a_{i}$)} inserts the pair by computing $r$ hash functions on $a_{i}$, and incrementing appropriate slots determined by the hashes by the quantity $c_{i}$. That is, for each hash function $h_{j}$, $1 \le j \le r$, we set $CMS[j][h_{j}(a_{i})]= CMS[j][h_{j}(a_{i})] + c_{i}$. Note that in this paper, we use $c_{i} = 1$, so every time an item is updated, it is just incremented by 1.

\item \texttt{ESTIMATE($a_{i}$)} reports the frequency of $a_{i}$ which can be an overestimate of the true frequency. It does so by calculating $r$ hashes and taking the minimum of the values found in appropriate cells. In other words, we return $min_{1 \le j \le r}(CMS[j][h_{j}(a_{i})])$. Because different elements can hash to the same cells, the count-min sketch can return the overestimated (never underestimated) value of the count, but in order for this to happen, a collision needs to occur in each row. The estimation error is bounded; the data structure guarantees that  for any particular item, the error is within the
range $\epsilon n$, with probability at least $1-\delta$, i.e., $Pr[\text{Error}(q) \geq \epsilon n] \leq \delta$.
\end{itemize}

\section{Buffered Count-Min Sketch}
In this section, we describe Buffered Count-Min Sketch, an adaptation of CMS to SSD.
The traditional CMS, when placed on external storage, exhibits performance issues due to random-write nature of hashing. Each update operation in CMS requires $c=\ln(1/\delta)$ writes to different rows and columns of CMS. On a large data structure, these writes become destined to different pages on disk, causing the update to perform $O(\ln(1/\delta))$ random SSD page writes. For high-precision CMS scenarios where $\delta=0.001\%-0.01\%$, this can be between 5-7 writes to SSD, which is unacceptable in a high-throughput scenario.

To solve this problem, we implement, analyze  and empirically test the data structure presented in~\cite{EydiMeMe18} that outlines three adaptations to the original data structure:
\begin{enumerate}
\item Partitioning CMS into pages and column-first layout: We logically divide the CMS on SSD into pages of block size $B$. CMS with $r$ rows, $c$ columns, cell size $w$, and a total of $S = cr$ $w$-bit counters, contains $k$ pages $P_{1}, P_{2}, P_{3}, \ldots, P_{k}$, where $k=S/B$ and each page spans contiguous $B/r$ columns: $P_{i}$ spans columns $[B(i-1)/r + 1, Bi/r]$. To improve cache-efficiency, CMS is laid out on disk in column-first fashion, which allows each logical page to be laid out sequentially in memory. Thus, each read/write of a logical page requires at most $2$ I/Os.

\item Hash localization: We direct all hashes of each element to a single logical page of CMS that is determined by an additional hash function $h_{0}:[1, k]$. The subsequent $r$ hash functions map to the columns inside the corresponding logical page, i.e., the range of $h_{1}, h_{2}, \ldots, h_{r}$ for an element $e$ is $[B(h_{0}(e)-1)/r + 1, Bh_{0}(e)/r]$. This way,  we  direct all updates and reads related to one element to one logical page.
\item Buffering: When an update operation occurs, the hashes produced for an element are first stored inside an in-memory buffer. The buffer is partitioned into sub-buffers of equal size $S_{1}, S_{2}, \ldots, S_{k}$, and they directly correspond to logical pages on disk in that $S_{i}$ stores the hashes for updates destined for page $P_{i}$. Each element first hashes using $h_{0}$, which determines in which sub-buffer the hashes will be temporarily stored for this element. Once the sub-buffer $S_{i}$ becomes full, we read the page $P_{i}$ from the CMS, apply all updates destined for that page, and write it back to disk. The capacity of a sub-buffer is $M/k$ hashes, which is equivalent to $M/kwr$ elements so the cost of an update becomes $kwr/M << 1$ I/O.
\end{enumerate}


\begin{figure}
\centering
\includegraphics[scale=0.3]{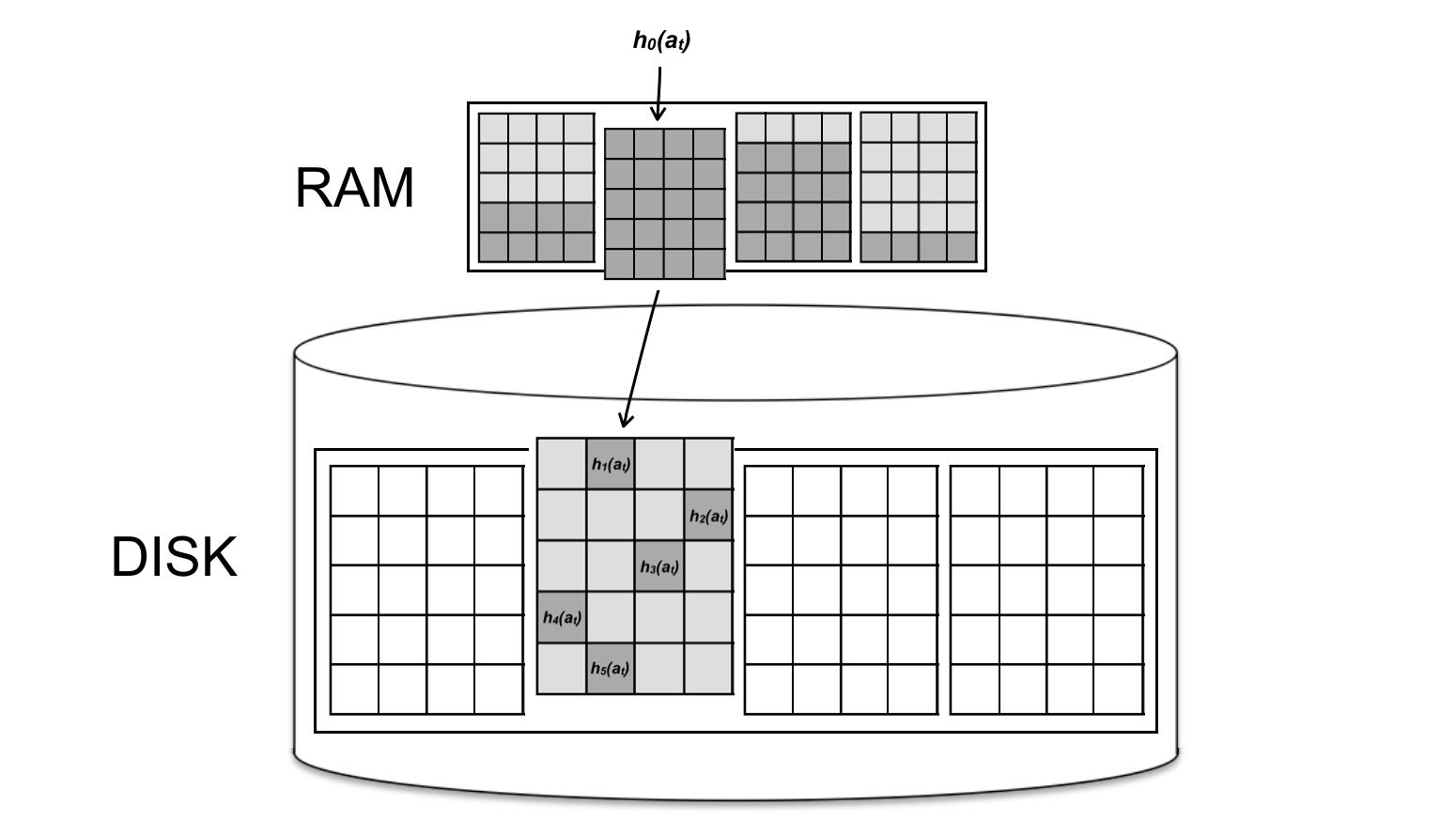}
\caption{\texttt{UPDATE} operation on Buffered Count-Min Sketch. Updates are stored in RAM, and all updates are destined for the same block on disk.}
\label{BCMS}
\end{figure}



 \begin{lstlisting}[caption={Buffered Count-Min Sketch - UPDATE function},label=list1,float,abovecaptionskip=-\medskipamount,mathescape]
 Require: key, r
 subbufferIndex$_i$ :=murmur$_0$(key);
 (*\bfseries for*) i:=1 to r (*\bfseries do*)
    hashes[i] :=murmur$_i$(key);
 (*\bfseries end for*)
 AppendToBuffer(hashes,subbufferIndex);
 
 (*\bfseries if*) isSubbufferFull(subbufferIndex) (*\bfseries then*)
    bcmsBlock :=readDiskPage(subbufferIndex);
    (*\bfseries for each*) entry in Subbuffer[subbufferIndex] (*\bfseries do*)
        (*\bfseries for each*) index in entry (*\bfseries do*)
            pageStart :=calculatePageStart(subbufferIndex);
            offset :=pageStart + entry[index];
            bcmsBlock[offset][index]++;
        (*\bfseries end for*)
    (*\bfseries end for*)
    writeBcmsPageBackToDisk(bcmsBlock);
    clearBuffer(subbufferIndex);
 (*\bfseries end if*)
 \end{lstlisting}

The pseudocode for \texttt{UPDATE($a_{i}$)} is shown in Algorithm~\ref{list1}, and for \texttt{ESTIMATE($a_{i}$)} in Algorithm~\ref{list2}. We use \texttt{murmurhash} as our hashing algorithm due to its efficiency and simplicity~\cite{Appleby11}. Unlike \texttt{UPDATE($a_{i}$)}, \texttt{ESTIMATE($a_{i}$)} operation is not buffered. In a related work~\cite{CanimMiBi10} that implements a buffered Bloom filter on SSD, the data structure buffers lookups. However in the count-min sketch scenario, buffering for \texttt{ESTIMATE($a_{i}$)} is unproductive given that even if the item is found in the buffer, we still need to check the CMS page to obtain the correct count. Therefore, our \texttt{ESTIMATE($a_{i}$)} is optimized for the worst-case single lookup scenario and works for solely insert/lookup as well as mixed workloads. 
The \texttt{ESTIMATE($a_{i}$)} also first computes the correct sub-buffer using $h_{0}$, and flushes the corresponding sub-buffer to SSD page in case some updates were present. Once it applies the necessary changes to the page, it reads the corresponding CMS cells specified by $r$ hashes and returns the minimum value.

 \begin{lstlisting}[caption={Buffered Count-Min Sketch - ESTIMATE function},label=list2,float,abovecaptionskip=-\medskipamount,mathescape]
 Require: key, k
 subbufferIndex$_i$ :=murmur$_0$(key);
 pageStart :=calculatePageStart(subbufferIndex);
 bcmsBlock :=readDiskPage(subbufferIndex);
 
  (*\bfseries if*) isSubbufferNotEmpty(subbufferIndex) (*\bfseries then*)
    (*\bfseries for each*) entry in Subbuffer[subbufferIndex] (*\bfseries do*)
        (*\bfseries for each*) index in entry (*\bfseries do*)
            offset :=pageStart + entry[index];
            bcmsBlock[offset][index]++;
        (*\bfseries end for*)
    (*\bfseries end for*)
    clearBuffer(subbufferIndex);
 (*\bfseries end if*)
 
 (*\bfseries for*) i:=1 to k (*\bfseries do*)
    value :=murmur$_i$(key);
    offset :=pageStart + value;
    estimation :=bcmsBlock[offset][i - 1];
    estimates[i] :=estimation;
 (*\bfseries end for*)
 writeBcmsPageBackToDisk(bcmsBlock);
 (*\bfseries return*) min(estimates)
 \end{lstlisting}

\section{Analysis of Buffered Count-Min Sketch}

In this section, we show that the buffering and hash localization do not substantially degrade the error guarantee of the buffered count-min data structure. Fix a failure probability $ 0 < \delta <1$ and let $0 < \epsilon(n) <1$ be the function of $n$ controlling the estimation error. Let $r = \ln (1/\delta)$ and $c = e/\epsilon$. The traditional count-min sketch uses $S= rc = (e/\epsilon)\ln (1/\delta)$ counters/words of space.  Recall that for our purposes, $1/n \leq \epsilon(n) << 1/M$.


Let $k = S/B$ be the number of blocks occupied by the buffered count-min sketch. We assume a block can hold $B$ counters. Our analysis will assume the following mild conditions:

\noindent\textbf{Assumption 1:}  $n$ is sufficiently larger than the number of blocks $k$, $n =\omega( k(\log k)^{3})$ suffices. Since $k$ depends inversely on $\epsilon(n)$, this assumption essentially means that $\epsilon(n) = \omega(1/n)$.
\noindent\textbf{Assumption 2:} $\lim_{n \rightarrow \infty} \epsilon(n) = 0$. 

Both conditions are satisfied, e.g., when $\epsilon(n) = 1/\log n$ or $1/n^{c}$ for any $c < 1$. 

For brevity, we will drop the dependence of $\epsilon(n)$ on $n$, and write the error rate as just $\epsilon$, however it is important to note that $\epsilon$ is not a constant. 

\begin{theorem}
The Buffered-Count-Min-Sketch is a data structure that uses $k$ blocks of space on disk and for any query $q$,

\begin{itemize}
\item returns \texttt{ESTIMATE}($q$) in 1 I/O and performs updates in $kwr/M$ I/Os amortized, or, if $kwr/M > 1$, in one I/O worst case.
\item Let \texttt{Error}(q) = \texttt{ESTIMATE}(q) - \texttt{TrueFrequency}(q). Then for any $C \geq 1$, \[Pr[\texttt{Error}(q) \geq n\epsilon (1+\sqrt{(2(C+1)k \log k) / n})] \leq \delta + O((\epsilon B/e)^{C}).\]  
\end{itemize}
\end{theorem}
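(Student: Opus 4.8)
The plan is to handle the two bullets separately: the I/O bounds follow almost directly from the construction, while the error bound is proved by conditioning on the block-selector hash $h_0$ (so that the $r$ rows decouple and the ordinary count-min argument applies inside a block), and then controlling the single new random quantity that hash localization introduces — the total frequency mass that $h_0$ routes into $q$'s block. For the I/O part: under the column-first layout each logical page is a contiguous run of $B$ counters on disk, so \texttt{ESTIMATE}($q$) flushes $q$'s sub-buffer into that page, applies the pending increments, reads the $r$ designated cells and returns their minimum — a single page access. For \texttt{UPDATE}($q$) we only append the $r$ hash values ($wr$ bits) to the sub-buffer indexed by $h_0(q)$; that sub-buffer has capacity $M/k$ bits, i.e. $M/(kwr)$ elements, and a flush (read page, apply all buffered increments, write page) costs $O(1)$ I/Os amortized over those $M/(kwr)$ updates, giving $kwr/M$ amortized I/Os, and if $kwr/M\ge 1$ one simply flushes on every update for one I/O worst case. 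I would also note up front that buffering is correctness-transparent: \texttt{ESTIMATE}($q$) flushes $q$'s sub-buffer before answering, so it returns exactly what a plain count-min sketch built from the hash family $(h_0,h_1,\dots,h_r)$ would return; hence the error analysis reduces to analysing that hash-localized sketch.

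For the error bound, fix $q$ and condition on $h_0$. Let $f_a$ be the true frequency of identifier $a$ and let $N_q=\sum_{a:\,h_0(a)=h_0(q)} f_a$ be the total mass landing in $q$'s block (which has width $B/r$ columns). The overestimate contributed to $q$ in row $j$ is $X_j=\sum_{a\ne q:\,h_0(a)=h_0(q)} f_a\,\mathbf 1[h_j(a)=h_j(q)]$, so $\mathbb E[X_j\mid h_0]=(N_q-f_q)\,r/B\le N_q r/B$ since $h_j$ hits each of the $B/r$ columns with probability $r/B$. By Markov, $\Pr[X_j\ge e\,N_q r/B\mid h_0]\le 1/e$, and because the $h_j$ are independent the $X_j$ are independent given $h_0$, so $\Pr[\texttt{Error}(q)=\min_j X_j\ge e\,N_q r/B\mid h_0]\le e^{-r}=\delta$. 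Using $kB=S=rc=re/\epsilon$, on the event $N_q\le (n/k)(1+\gamma)$ we get $e\,N_q r/B\le n\epsilon(1+\gamma)$, hence $\Pr[\texttt{Error}(q)\ge n\epsilon(1+\gamma)\mid h_0]\le\delta$ whenever $h_0$ places $q$ in a block of mass at most $(n/k)(1+\gamma)$. Taking $\gamma=\sqrt{2(C+1)k\log k/n}$ and applying the law of total probability, $\Pr[\texttt{Error}(q)\ge n\epsilon(1+\gamma)]\le\delta+\Pr[N_q>(n/k)(1+\gamma)]$.

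It remains to bound $\Pr[N_q>(n/k)(1+\gamma)]$, and this is the crux of the proof. Thinking of the $n$ unit increments as balls dropped into the $k$ blocks (all increments of a given identifier falling into the same block, and the block of $q$ being fixed once we condition on $h_0(q)$), $N_q$ is the load of one fixed block, with mean $n/k$; a multiplicative Chernoff bound then gives $\Pr[N_q\ge(n/k)(1+\gamma)]\le \exp\!\big(-\Theta(\gamma^2 n/k)\big)=\exp\!\big(-\Theta((C+1)\log k)\big)=k^{-\Theta(C+1)}$, and since $k=rc/B$ one checks $k^{-(C+1)}\le (r/k)^C=(\epsilon B/e)^C$, which is the claimed $O((\epsilon B/e)^C)$ term. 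Assumption~1 ($n=\omega(k(\log k)^3)$) is exactly what places us in the Chernoff regime with $\gamma=o(1)$, and Assumption~2 guarantees the estimate is read asymptotically in $n$. I expect the main obstacle to be making this last step fully rigorous: the increments of a repeated identifier are perfectly correlated, so $N_q$ is really the weighted sum $\sum_{a\ne q} f_a\,\mathbf 1[h_0(a)=h_0(q)]$ of independent but non-identical terms, and one must either invoke a Bernstein/weighted-Chernoff inequality (where the extra $(\log k)^2$ slack in Assumption~1 helps absorb the contribution of the $f_a$'s) or carry the argument out at the granularity of individual increments; pinning the exponent down so that it cleanly beats $C$ rather than $C+1$ up to an $o(1)$ loss is the fiddly part, and is presumably why the abstract-level statement carries the extra $+o(1)$ slack.
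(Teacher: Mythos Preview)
Your proposal follows essentially the same route as the paper: condition on the block-selector $h_0$, apply Markov in each row and conditional independence across rows to get the $\delta$ term, then control the load of $q$'s block by a concentration bound to get the additive $O((\epsilon B/e)^C)$ term. The paper differs only in two minor technical choices: it invokes the Raab--Steger balls-in-bins asymptotic rather than a direct Chernoff bound, and it bounds the \emph{maximum} load over all $k$ blocks (incurring an extra union-bound factor of $k$) rather than just the load of $q$'s block as you do; your version is slightly more direct on both counts. You are also correct to flag the weighted-sum issue---the paper models $n(q)$ as $B(n,1/k)$, implicitly treating the $n$ unit increments as independent balls, which is only accurate when all identifiers are distinct; the paper does not address this, so your remark about needing a weighted/Bernstein-type bound is a genuine refinement rather than an oversight.
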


\noindent\textbf{Remark:} By assumption $1$, $\sqrt{(2(C+1)k \log k) / n}$ is $o(1)$  (in fact, it is $o(1/\log k)$). By assumption $2$, $(\epsilon B/e)^{C}$ is $o(1)$. Thus we claim that the buffered count-min-sketch gives almost the same guarantees as a traditional count-min sketch, while obtaining a factor $r$ speedup  in queries.The guarantee for estimates taking 1 I/O is apparent from construction, as only one block needs to be loaded\footnote{In practice, we may need 2 I/Os sometimes due to block-page alignment, but never more than 2}.

The proof is a combination of the classical analysis of CMS and the maximum load of balls in bins when the number of bins is much smaller than the number of balls. Also, note that unlike the traditional CMS, the errors for a query $q$ in different rows are no longer independent (in fact, they are positively correlated: a high error in one row implies more elements were hashed by $h_{0}$ to the same bucket as $q$).

The hash function $h_{0}$ maps into $k$ buckets, each having size $B$ (and so we will also call them blocks). Each bucket can be thought of as a $r \times B/r$ matrix. Note that $r = \ln (1/\delta)$, and $B/r = e/(\epsilon k)$. We assume that $h_{0}$ is a perfectly random hash function, and, abusing notation, identify a bucket/block with a bin, where $h_{0}$ assigns elements (balls) to one of the $k$ buckets (bins).

In this scenario we use Lemma 2(b) from \cite{raab1998balls} and adapt it to our setting.

\begin{lemma}\label{binsnormal}\cite{raab1998balls}
Let $B(n,p)$ denote a Binomial distribution with parameters $n$ and $p$, and $q=1-p$. If $t= np + o((pqn)^{2/3})$ and $x:= \frac{t-np}{\sqrt{pqn}}$ tends to infinity, then 
\[ Pr[B(n,p) \geq t]= e^{-x^{2}/2 - \log x - \frac{1}{2} \log \pi + o(1)}.\]
\end{lemma}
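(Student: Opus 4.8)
The statement is a sharp moderate-deviation tail estimate for the binomial, and the plan is to obtain it by (i) pinning down each individual point mass $P_{j} := \binom{n}{j}p^{j}q^{n-j}$ via Stirling's formula, and then (ii) summing over $j \ge t$ by comparison with a Gaussian integral. The two hypotheses are precisely what make this work: the constraint $t - np = o((pqn)^{2/3})$ keeps us inside the regime where the local (de Moivre--Laplace) Gaussian approximation of $P_{j}$ has relative error $o(1)$, while $x \to \infty$ activates the Mills-ratio asymptotic $\int_{x}^{\infty} e^{-v^{2}/2}\,dv \sim e^{-x^{2}/2}/x$ that is responsible for the $-\log x$ term in the exponent.

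\emph{Step (i): local approximation of a single term.} Writing $j = np + y\sqrt{pqn}$ and applying $\log m! = m\log m - m + \tfrac12\log(2\pi m) + o(1)$ to each of the three factorials in $P_{j}$, I would expand $\log P_{j}$ as a function of $y$. The constant and quadratic contributions give $\log P_{j} = -\tfrac12\log(2\pi pqn) - \tfrac{y^{2}}{2} + R(y)$, where the remainder $R(y)$ collects the cubic and higher corrections and satisfies $R(y) = O\!\big(|y|^{3}/\sqrt{pqn}\big)$. Since every $j$ that contributes non-negligibly to the tail has $y = O(x) = o\big((pqn)^{1/6}\big)$, we obtain $R(y) = o(1)$ uniformly, hence $P_{j} = \frac{1}{\sqrt{2\pi pqn}}\,e^{-y^{2}/2}\,(1+o(1))$ uniformly over the relevant range.

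\emph{Step (ii): summation and conclusion.} Next I would sum the local approximation over $j \ge t$. Because consecutive terms differ by a factor $1 + o(1)$ throughout the range where the mass concentrates, the sum and the corresponding integral agree up to a multiplicative $1+o(1)$, giving
\[ \Pr[B(n,p) \ge t] = \frac{1}{\sqrt{2\pi pqn}}\int_{t}^{\infty} e^{-(u-np)^{2}/(2pqn)}\,du \,(1+o(1)). \]
Substituting $v = (u-np)/\sqrt{pqn}$ collapses this to $\frac{1}{\sqrt{2\pi}}\int_{x}^{\infty} e^{-v^{2}/2}\,dv\,(1+o(1))$, and the Mills-ratio asymptotic turns it into $\frac{1}{x\sqrt{2\pi}}\,e^{-x^{2}/2}(1+o(1))$. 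Taking logarithms yields $-\tfrac{x^{2}}{2} - \log x - \tfrac12\log(2\pi) + o(1)$, which is the asserted form, with the additive constant fixed by the Stirling normalizations.

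The main obstacle is \emph{uniformity}: both approximations must hold simultaneously over the entire tail, not just near $j = t$. I would handle this by first showing that the contribution of terms with $j - np$ much larger than $x\sqrt{pqn}$ is negligible --- the terms decay at least geometrically once $y$ exceeds $x$ by a constant factor, which can also be seen from a crude Chernoff bound --- so that the delicate local expansion only needs to be controlled where $y = O(x) = o((pqn)^{1/6})$, exactly the range where the cubic remainder $R(y)$ is $o(1)$. Establishing that this cubic Stirling term is uniformly $o(1)$ on the truncated range, and that the sum-to-integral replacement incurs only a $1 + o(1)$ multiplicative error there, is the technical heart of the argument; the hypotheses $t - np = o((pqn)^{2/3})$ and $x \to \infty$ are calibrated precisely to close both gaps at once.
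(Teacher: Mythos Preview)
The paper does not prove this lemma at all: it is quoted verbatim from Raab and Steger \cite{raab1998balls} (their Lemma~2(b)) and used as a black box in the proof of Lemma~\ref{binsours}. So there is no paper-side argument to compare against; you have supplied a proof where the authors simply imported one.

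That said, your route --- local de Moivre--Laplace via Stirling, then sum-to-integral, then Mills ratio --- is exactly the classical derivation behind this kind of moderate-deviation estimate, and your identification of the two hypotheses with the two places they are needed (the $o((pqn)^{2/3})$ bound controls the cubic Stirling remainder; $x\to\infty$ enables the Mills asymptotic) is correct. One small point: your final constant is $-\tfrac12\log(2\pi)$, whereas the lemma as transcribed in the paper has $-\tfrac12\log\pi$. Your constant is the one that falls out of the standard computation; the $\pi$ in the paper is most likely a transcription slip from the Raab--Steger original, and in any case the lemma is only used in the paper via the cruder bound $\Pr[B(n,p)\ge t]\le e^{-x^{2}/2}$, so the discrepancy is immaterial downstream.
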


Let $M(n,k)$ denote the maximum number of elements that fall into a bucket, when hashed by $h_{0}$.

\begin{lemma}\label{binsours}
Let $C \geq 1$ and $t= n/k + \sqrt{2(C+1) \frac{n \log k}{k}}$. Then 
\[ Pr[M(n,k) \leq t] \geq 1 - 1/k^C. \]
\end{lemma}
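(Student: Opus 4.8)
The plan is to bound $M(n,k)$ by a union bound over the $k$ buckets, controlling the load of a single bucket with the Binomial tail estimate of Lemma~\ref{binsnormal}. Fix a bucket $i$; the number of balls $L_i$ landing in it is distributed as $B(n,p)$ with $p = 1/k$ (and $q = 1-1/k$), since $h_0$ is assumed perfectly random. I would set $t = n/k + \sqrt{2(C+1)(n\log k)/k}$ and let $x = (t - np)/\sqrt{pqn} = \sqrt{2(C+1)\log k}/\sqrt{q}$. First I would check the hypotheses of Lemma~\ref{binsnormal}: the deviation $t - np = \sqrt{2(C+1)(n\log k)/k}$ must be $o((pqn)^{2/3}) = o((n/k)^{2/3}q^{2/3})$, which follows from Assumption~1 ($n = \omega(k(\log k)^3)$), since then $(n\log k/k)^{1/2}$ is dominated by $(n/k)^{2/3}$ up to the slowly varying $q^{2/3}$ factor; and $x \to \infty$ because $\log k \to \infty$ (as $k \to \infty$ with $n$, again by Assumption~1). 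So the lemma applies.

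Next I would plug $x$ into the conclusion of Lemma~\ref{binsnormal}. We get
\[ Pr[L_i \geq t] = e^{-x^2/2 - \log x - \frac12\log\pi + o(1)}. \]
The dominant term is $e^{-x^2/2}$. Since $q = 1 - 1/k \le 1$, we have $x^2/2 = (C+1)\log k / q \ge (C+1)\log k$, so $e^{-x^2/2} \le k^{-(C+1)}$. The remaining factors $e^{-\log x - \frac12\log\pi + o(1)} = \Theta(1/x)$ are at most $1$ for large $n$ (indeed $x \to \infty$), so $Pr[L_i \ge t] \le k^{-(C+1)}$ for $n$ large enough. (One small subtlety: the $q$ in the denominator of $x^2/2$ actually helps us, since $1/q > 1$; if one wanted to be careful about the $o(1)$ and the lower-order factors for all $n$ rather than asymptotically, one could absorb the slack coming from $1/q > 1$, but since the statement is an asymptotic one this is not needed.)

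Finally, a union bound over the $k$ buckets gives
\[ Pr[M(n,k) \geq t] \leq k \cdot k^{-(C+1)} = k^{-C}, \]
hence $Pr[M(n,k) \le t] \ge 1 - 1/k^C$, as claimed. The only real obstacle is the bookkeeping in verifying the two hypotheses of Lemma~\ref{binsnormal} under Assumption~1 — in particular making sure the chosen deviation $t - np$ is genuinely in the $o((pqn)^{2/3})$ regime where the sharp Gaussian-type Binomial tail holds, rather than in the large-deviation regime; this is exactly where the condition $n = \omega(k(\log k)^3)$ is used, since $\sqrt{(n\log k)/k} = o((n/k)^{2/3})$ is equivalent (up to the $q^{2/3}$ factor) to $n/k = \omega((\log k)^3)$. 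Everything else is substitution and a union bound.
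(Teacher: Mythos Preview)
Your proposal is correct and essentially identical to the paper's own proof: both apply Lemma~\ref{binsnormal} to a single bucket, verify the $o((pqn)^{2/3})$ hypothesis via Assumption~1, bound the single-bucket tail by $e^{-x^{2}/2} \le k^{-(C+1)}$ using $x^{2}/2 \ge (C+1)\log k$, and finish with a union bound over the $k$ buckets. One small bookkeeping point: the fact that $k \to \infty$ (needed for $x \to \infty$) is a consequence of Assumption~2 ($\epsilon(n)\to 0$, hence $k \propto 1/\epsilon \to \infty$), not Assumption~1 as you wrote.
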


\begin{proof}
We first check that $t$ satisfies the conditions of Lemma~\ref{binsnormal}. Since $h_{0}$ is uniform, $p = 1/k$ (i.e., each bucket is equally probable), and $np = n/k$. We need to check that the extra term in $t$,  $\sqrt{2(C+1)\frac{n \log k}{k}}$ is $o((n(1-1/k)/k)^{2/3})$. This is precisely the condition that $n =\omega( k(\log k)^{3})$ (assumption $1$).

Next we apply Lemma~\ref{binsnormal}. In our case,
\[ x = \sqrt{\frac{2(C+1) n \log k / k}{n (1-1/k) / k}} = \sqrt{2(C+1) \log k (1+ 1/k-1)}, \] 
Now by assumption $2$, $\epsilon(n)$ goes to zero as $n$ goes to infinity, and so $k \propto 1/\epsilon(n)$ goes to infinity, and therefore $x$ goes to infinity as $n$ goes to infinity.  Thus we have that the number of elements in any particular bucket (which follows a $B(n,1/k)$ distribution) is larger than $t$ with probability $e^{-x^{2}/2 - \log x - \frac{1}{2} \log \pi + o(1)} \leq e^{-x^{2}/2}$. Putting in $x =\sqrt{2(C+1) \log k (1+ \frac{1}{k-1})}$, we get $x^{2}/2 = (C+1) \log k (1+  1/(k-1) \geq (C+1) \log k$, and thus the probability is at most $e^{-(C+1) \log k} = 1/k^{C+1}$. 
 
Thus the probability that the maximum number of balls in a bin is more than $t$ is bounded (by the union bound) by $k.1/k^{C+1} = 1/k^{C}$, and the lemma is proved.
\end{proof}

Now that we know that with probability as least $1-1/k^{C}$, no bucket has more than $t$ elements, we observe that a bucket serves as a ``mini'' CMS for the elements that hash to it. In other words, let $n(q)$ be the number of elements that hash to the same bucket as $q$ under $h_{0}$. The expected error in the $i$th row of the mini-CMS for $q$ (the entry for which is contained inside the bucket of $q$), is $\mathbb{E}[\texttt{Error}_{i}(q)] = n(q)/(B/r) = n(q) \epsilon k/e$. 

By Markov's inequality $\text{Pr}[ \texttt{Error}_{i}(q) \geq n(q) k\epsilon] \leq 1/e$. 

Let $\alpha = t \epsilon k/e = (n/k + \sqrt{(2(C+1) n \log k)/k}) \epsilon k / e = (n \epsilon/e)(1+\sqrt{(2(C+1)k \log k) / n})$. We now compute the bound on the final error (after taking the min) as follows.

\begin{eqnarray}
Pr(\text{Error}(q) \geq e \alpha) &=& Pr(\text{Error}_{i}(q) \geq e\alpha \ \ \   \forall i \in \{1,\cdots,r\}) \notag \\
&=& Pr(\text{Error}_{i}(q) \geq e\alpha \ \ \   \forall i |  \ \ n(q) \leq t)Pr(n(q) \leq t) \notag \\
&+&  Pr(\text{Error}_{i}(q) \geq e\alpha \ \ \   \forall i | \ \ n(q) \geq t)Pr(n(q) \geq t) \notag \\
&\leq& \left(\frac{1}{e}\right)^{r}1 + 1(1/k^{C})\notag  \\
&=& \delta + 1/k^{C},\notag 
\end{eqnarray}

where the second last equality follows from Markov's inequality on $\text{Error}_{i}(q)$ and Lemma~\ref{binsours}. Finally, by observing that for a fixed $\delta$, $k = O(e/B \epsilon)$, the proof of the theorem is complete.


\section{Evaluation}

In this section, we evaluate our implementation of the \bcms. We compare the
\bcms against the (traditional) \cms.
We evaluate each data structure on two fundamental operations, insertions and
queries. We evaluate queries for  set of elements chosen uniformly at random.

In our evaluation, we address the following questions about how the performance
of \bcms compares to the \cms:

\begin{itemize}

  \item How does the insertion throughput in \bcms compare to \cms on SSD?

  \item How does the query throughput in \bcms compare to \cms on SSD?

  \item How does the hash localization in \bcms affect the overestimates
    compared to the overestimates in \cms?

\end{itemize}

\subsection{Experimental setup}

To answer the above questions, we evaluate the performance of the \bcms and the
\cms on SSD by scaling the sketch out-of-RAM. For SSD benchmarks, we use four
different RAM-size-to-sketch-size ratios, $2$, $4$, $8$, and $16$.  The
RAM-size-to-sketch-size ratio is the ratio of the size of the available RAM and
the size of the sketch on SSD.
We fix the size of the available RAM to be $\approx64$MB and change the size of
the sketch to change the RAM-size-to-sketch-size ratio. The page size in all our
benchmarks was set to $4096$B.
In all the benchmarks, we measure the throughput (operations per second) to
evaluate the insertion and query performance.

To measure the insertion throughput, we first calculate the number of elements we
can insert in the sketch using calculations described in \Cref{calculations}.
During an insert operation, we first generate a $64$-bit integer from a
uniform-random distribution and then add that integer to the sketch. This way,
we do not use any extra memory to store the set of integers to be added to the
sketch.
We then measure the total time taken to insert the given set of elements in the
sketch.
Note that for the \bcms, we make sure to flush all the remaining inserts from
the buffer to the sketch on SSD at the end and include the time to do that in
the total time.

To measure the query throughput, we query for elements drawn from a
uniform-random distribution and measure the throughput. 
The reason for the query benchmark is to simulate a real-world query workload
where some elements may not be present in the sketch and the query will terminate
early thereby requiring fewer I/Os.

For all the query benchmarks, we first perform the insertion benchmark and write
the sketch to SSD. After the insertion benchmark, we flush all caches (page
cache, directory entries, and inodes).
We then map the sketch back into RAM and perform queries on the sketch. This way
we make sure that the sketch is not already cached in kernel caches from the
insertion benchmark.

We compare the overestimates in \bcms and \cms for all the four sketch sizes for
which we perform insertion and query benchmarks. To measure the overestimates,
we first perform the insertion benchmark. However, during the insertion benchmark, we
also store each inserted element in a multiset. Once insertions are done, we
iterate over the multiset and query for each element in the sketch. We then take
the difference of the count returned from the sketch and the actual count of the
element to calculate the overestimate.

For SSD-based experiments, we allocate space for the sketch by \texttt{mmap}-ing it to
a file on SSD. We then control the available RAM to the benchmarking process
using \texttt{cgroups}.
We fix the RAM size for all the experiments to be $\approx67$MB. We then
increase the size of the sketch based on the RAM-size-to-sketch-size ratio of
the particular experiment.
For the \bcms, we use all the available RAM as the buffer. Paging is handled by
the operating system based on the disk accesses. The point of these experiments
is to evaluate the I/O efficiency of sketch operations.

All benchmarks were performed on a $64$-bit Ubuntu 16.04 running Linux kernel
4.4.0-98-generic. The machine has Intel Skylake CPU U (Core(TM) i7-6700HQ CPU @
$2.60$GHz with $4$ cores and $6$MB L$3$ cache) with $32$ GB RAM and $1$TB
Toshiba SSD. 

\begin{table}
  \begin{centering}
    \begin{tabular}{cccc}
    \hline
    \textbf{Size} & \textbf{Width} & \textbf{Depth} & \textbf{\#elements} \\
    \hline
    \hline
    128MB & 3355444 & 5 & 9875188 \\[.05in] 
    256MB & 6710887 & 5 & 19750377 \\[.05in] 
    512MB & 13421773 & 5 & 39500754 \\[.05in] 
    1GB &26843546 & 5 & 79001508 \\[.05in]
    \hline
  \end{tabular}
  \caption{Size, width, and depth of the sketch and the number of elements
    inserted in \cms and \bcms in our benchmarks (insertion, query, and
  overestimate calculation).}
    \label{tab:cms-config}
  \end{centering}
\end{table}

\subsection{Configuring the sketch}
\label{calculations}

In our benchmarks, we take as input $\delta$, overestimate $O$ ($\epsilon n$),
and the size of the sketch to configure the sketch $S$. The depth of the \cms
$D$ is $\lceil\ln{\frac{1}{\delta}}\rceil$. The number of cells $C$ is
$S/CELL\_SIZE$. And width of the \cms is $\lceil e/\epsilon\rceil$.

Given these parameters, we calculate the number of elements $n$ to be inserted in
the sketch as $\frac{C \times O}{D \times e}$. In all our experiments, we set
$\delta$ to $0.01$ and maximum overestimate to $8$ and change the sketch size.
Table \ref{tab:cms-config} shows dimensions of the sketch and number of elements
inserted based on the size of the sketch.

\begin{figure}
\centering
\includegraphics[width=12cm]{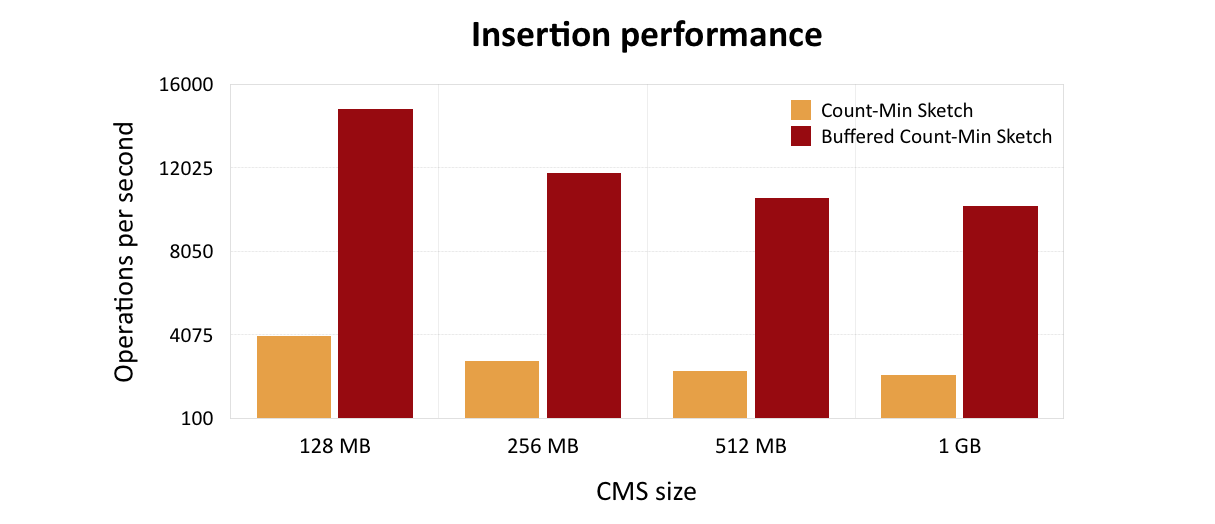}
\caption{Insert throughput of \cms and \bcms with increasing sizes. The available
RAM is fixed to $\approx64$MB. With increasing sketch sizes (on x-axis) the
RAM-size-to-sketch-size is also increasing $2$, $4$, $8$, and $16$. (Higher is
better)}
\label{fig:exp1}
\end{figure}
\begin{figure}
\centering
\includegraphics[width=12cm]{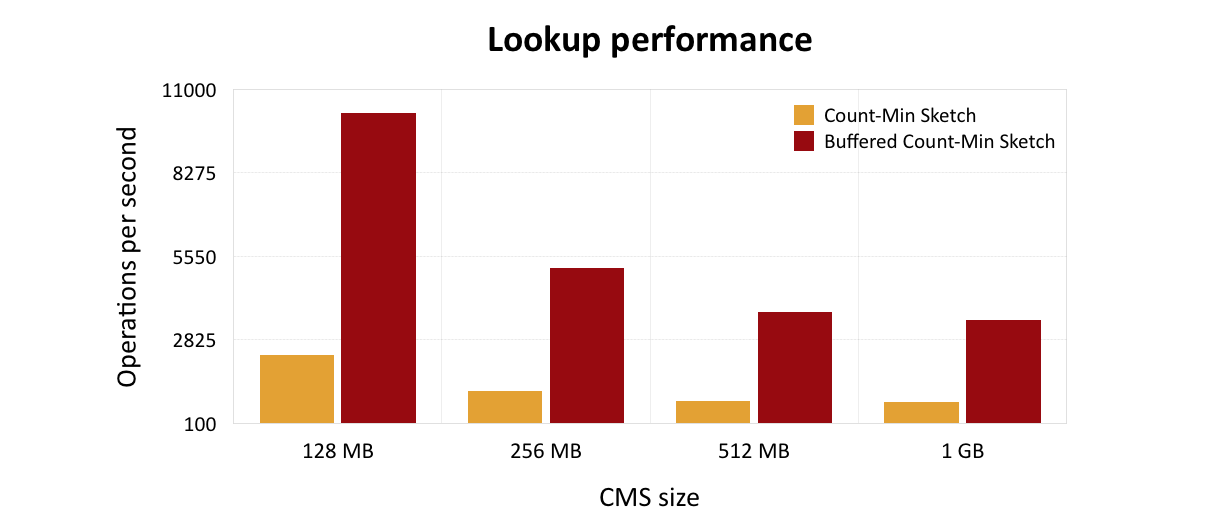}
\caption{Query throughput of \cms and \bcms with increasing sizes. The available
RAM is fixed to $\approx64$MB. With increasing sketch sizes (on x-axis) the
RAM-size-to-sketch-size is also increasing $2$, $4$, $8$, and $16$. (Higher is
better)}
\label{fig:exp2}
\end{figure}
\begin{figure}
\centering
\includegraphics[width=12cm]{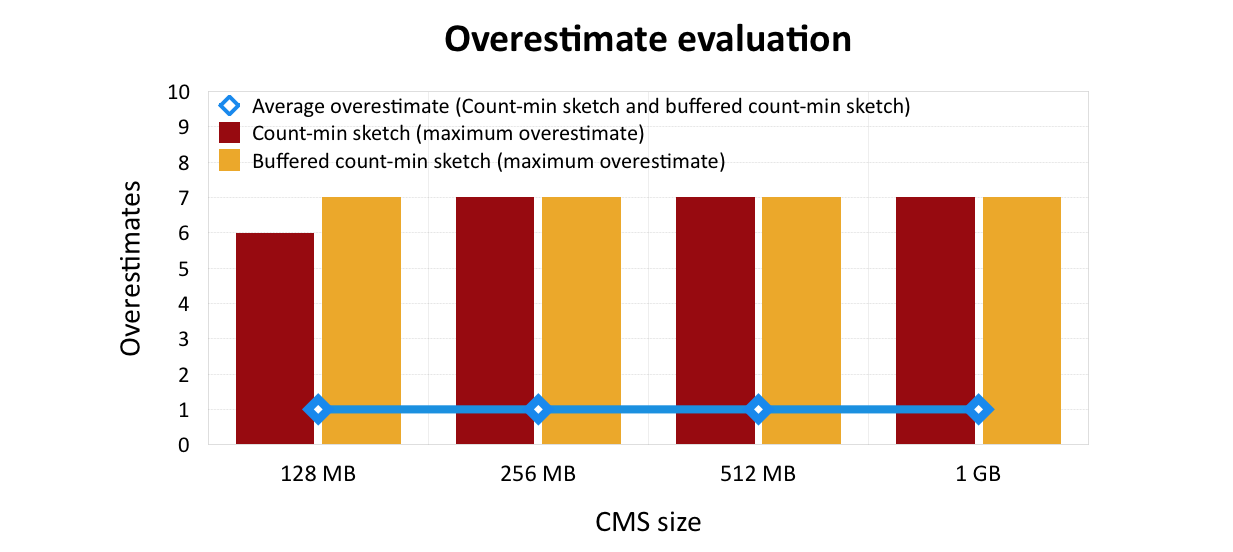}
\caption{Maximum overestimate reported by \cms and \bcms for any inserted
  element for different sketch sizes. The blue line represents the average
overestimate reported by \cms and \bcms for all the inserted elements. The
average overestimate is same for both \cms and \bcms.}
\label{fig:exp3}
\end{figure}

%

\subsection{Insert Performance}

Figure \ref{fig:exp1} shows the insert throughput of \cms and \bcms with
changing RAM-size-to-sketch-size ratios. \bcms is $3.7\times$--$4.7\times$
faster compared to the \cms in terms of insert throughput on SSD.

The \bcms performs less than one I/O per insert operation because all the hashes
for a given element are localized to a single page on SSD. However, in the \cms
the hashes for a given element are spread across the whole sketch. Therefore,
the insert throughput of the \bcms is $3.7\times$ when the sketch is twice the
size of the RAM. And the difference in the throughput increases as the sketch
gets bigger and RAM size stays the same.

\subsection{Query Performance}

Figure \ref{fig:exp2} shows the query throughput of \cms and \bcms with changing
RAM-size-to-sketch-size ratios. \bcms is $\approx4.3\times$ faster compared to
the \cms in terms of query throughput on SSD.

The \bcms performs a single I/O per query operation because all the hashes for a
given element are localized to a single page on SSD. In comparison, \cms may
have to perform as many as $h$ I/Os per query operation, where $h$ is the depth
of the \cms.

\subsection{Overestimates}

In Figure \ref{fig:exp3} we empirically compare overestimates returned by the
\cms and \bcms for all the four sketch sizes for which we performed insert and
query benchmarks. And we found that the average and the maximum overestimate
returned from \cms and \bcms are exactly the same. This shows that empirically
hash localization in \bcms does not have any major effect on the overestimates.

%

%
%

\section{Conclusion}

In this paper we described the design and implementation of the Buffered
count-min sketch, and empirically showed that our implementation achieves
$3.7\times$--$4.7\times$ the speedup on update (insert) and $4.3\times$ speedup
on estimate (lookup) operations. Queries take $1$ I/O, which is optimal in the
worst case if not allowed to buffer. However, we do not know whether the update
time is optimal. To the best of our knowledge, no lower bounds on the update
time of such a data structure are known (the only known upper bounds are on
space, e.g., in \cite{ganguly2008lower}). We leave the question of deriving
update lower bounds and/or a SSD-based data structure with faster update time
for future work.

\bibliography{lipics-v2018-bcms}

\end{document}